\documentclass[journal,twoside,web]{ieeecolor}

\usepackage{generic}

\usepackage{cite}
\usepackage{amsmath,amssymb,amsfonts, amsthm}
\usepackage{algorithmic}
\usepackage{graphicx}
\usepackage{textcomp}
\usepackage{xcolor}
\usepackage{todonotes}

\newcommand{\strategy}{\gamma}
\newcommand{\strategySpace}{\mathcal{S}}
\newcommand{\belief}{\mu}
\newcommand{\beliefSpace}{\mathcal{B}}
\newcommand{\type}{\theta}
\newcommand{\typeSpace}{\Theta}
\newcommand{\prior}{\pi}
\newcommand{\feedbackspace}{\mathfrak{L}}
\newcommand{\criticalTime}{\tilde{s}}

\newtheorem{definition}{Definition}
\newtheorem{proposition}{Proposition}
\newtheorem{assumption}{Assumption}
\newtheorem{remark}{Remark}

\def\BibTeX{{\rm B\kern-.05em{\sc i\kern-.025em b}\kern-.08em
    T\kern-.1667em\lower.7ex\hbox{E}\kern-.125emX}}

\title{On Type Deception in Linear-Quadratic Differential Games}

\author{Jesse Milzman and Dipankar Maity
\vspace{-5mm}
\thanks{\copyright~2026 IEEE. Personal use of this material is permitted. Permission from IEEE must be obtained for all other uses, in any current or future media, including reprinting/republishing this material for advertising or promotional purposes, creating new collective works, for resale or redistribution to servers or lists, or reuse of any copyrighted component of this work in other works.}
\thanks{This research is supported by the ARL grant ARL DCIST CRA W911NF-17-2-0181. 
The views expressed in this paper are those of the authors and do not reflect the official policy or position of the United States Government, Department of War,
or its components.}
\thanks{J.~Milzman is with the DEVCOM Army Research Laboratory, Brooklyn, New York USA.
} 
\thanks{D.~Maity is with the Department of Electrical and Computer Engineering, University of North Carolina at Charlotte, NC, 28223, USA. 
{\tt jesse.m.milzman.civ@army.mil, dmaity@charlotte.edu}
}%
}



\begin{document}

\maketitle
\thispagestyle{empty}
\pagestyle{empty}

\begin{abstract}
We consider two-player linear-quadratic differential games of incomplete information, in which one player has a private type initially unknown to the other. The typed player has incentive to conceal their type, while the uninformed player has the potential to infer it during play.
Any ex-ante equilibrium in this setting will decompose into a deceptive, pooling phase, and a complete-information, revelatory phase. We demonstrate how to solve both phases via nested Riccati equations.
Candidate equilibria are then found by maximizing the game value over a scalar revelation time, for which we provide a gradient in the case of time-homogeneous system matrices.
We conclude by demonstrating our framework in a pursuit-evasion game with time-varying control advantages, finding interior optimal revelation times that confirm deception has quantifiable ex-ante value.
\end{abstract}

\begin{IEEEkeywords}
differential games, incomplete information, deception, linear-quadratic games
\end{IEEEkeywords}


\section{Introduction}
\label{sec:introduction}

Classical differential game theory implicitly assumes that all players share a \textit{common understanding} of the system dynamics and objective functions. Under this assumption, equilibrium strategies can be characterized through the Hamilton-Jacobi-Isaacs (HJI) equation \cite{isaacs1999differential}, or through coupled Riccati equations in the special case of linear-quadratic (LQ) games \cite{bacsar2008h}.
The problem becomes even more challenging when one player recognizes that its model assumptions are less accurate than those of its opponent, and the opponent is aware of this  asymmetry \cite{shishika2025deception}. 
In such settings, the player with superior knowledge may attempt to strategically exploit this information advantage, while the other player must adopt cautious strategies to mitigate potential exploitation. 
These situations arise naturally in adversarial settings where agents possess different levels of situational awareness or model fidelity.

In this paper, we study differential games with \textcolor{black}{asymmetric information}, where agents possess different levels of \textcolor{black}{knowledge} regarding the system model and are aware of this asymmetry. Relaxing the common knowledge assumption leads to a fundamentally different strategic structure for the resulting game. In particular, we show that the overall game can be decomposed into two sequential phases. 
In the first phase, the player with superior knowledge deliberately conceals their advantage by behaving in a way that prevents the opponent from inferring its type.
In the second phase, the player reveals their type, after which both players operate under the same knowledge.
The optimal revelation time (equivalently, the duration of the first phase relative to the total game duration) becomes a key strategic decision for the informed player.
Such deliberate concealment to preserve informational advantage builds on an underexplored facet of differential games.

Games with \textit{incomplete information} have been studied extensively in the literature, with the work of \cite{harsanyi1967games} among the earliest contributions, establishing Bayesian games in the static setting.
In the dynamic setting,
\cite{cardaliaguet2012games} considers a class of dynamic games in which the payoff function type is unknown to one or both players.
The authors in \cite{cardaliaguet2014pure} consider \textcolor{black}{instead} a differential game variant in which the information asymmetry arises because one agent knows the initial state, whereas the other agent only knows the distribution from which the initial state is drawn.
This line of work closely relates to studies on measurement asymmetry, where players observe noisy or partial measurements of the system state \cite{bagchi1981linear, maity2017linear, maity2017stochastic}.
\textcolor{black}{None of these works, however, address themselves to the possibility of deception or concealment in dynamic or differential games.}

\textcolor{black}{There is some work concerned with deceptive belief space dynamics, most notably \cite{huang2021dynamic}.
However, deception here emerges  because it is induced by hand-crafted cost functions that reward trajectory ambiguity, instead of emerging naturally as a means of information advantage.}
Our work is most closely related to the setting considered in \cite{shishika2025deception}, where one agent knows the true system dynamics while the other agent only knows a set of possible dynamics. 
Similar to ours, the more informed player has two types (fast and slow), who align their dynamics to conceal their type for a period, before the fast player reveals their type to exploit their advantage.
This previous work, however, is focused on the study of a very particular differential game, and does not provide a more systematic understanding of the underlying phenomenon of deception.

\textbf{Contribution.} To the best of our knowledge, we contribute here the first investigation of capability deception within two-player, zero-sum LQ games with asymmetric type information.
In particular, we specify the ex-ante equilibrium concept that emerges from careful study of the players' information structures.
We prove that any ex-ante equilibrium decomposes into a deceptive ambiguity phase followed by complete-information play. We show that this ambiguous phase reduces to a tractable LQ formulation, parametrized by revelation time. For constant dynamics, we provide an analytic gradient of the game value with respect to this revelation time. Finally, we apply our approach to a time-varying pursuit-evasion game, demonstrating that deception has quantifiable ex-ante value.


\section{Preliminaries}
\label{sec:preliminaries}

We review a two-player, zero-sum LQ (ZSLQ) game of finite horizon $T > 0$.
Let $x(t) \in \mathbb{R}^{n}$ be the system state and $u^i(t) \in \mathbb{R}^{m^i}$ be player P$i$'s control input, for $i=1,2$. The system dynamics and cost functional are given as
\begin{gather}
    \label{eq:zero_sum_LQ.dynamics}
    \dot{x} = A x + B^1 u^{1} + B^{2} u^{2} \\
    \label{eq:zero_sum_LQ.cost}
    J = \int_{0}^T  \left\| x \right\|_{Q}^2 + \left\| u^1 \right\|_{R^1}^2 -  \left\| u^{2} \right\|_{R^2}^2 \, dt + \left\| x(T) \right\|_{Q_f}^2
\end{gather}
where P1 minimizes and P2 maximizes $J$; $Q, Q_f \geq 0$, $R^i > 0$,
and all system matrices are assumed at least piecewise continuous in time, though we typically omit $t$ as an argument.
It is well known that if the Riccati equation~\eqref{eq:LQ riccati} admits a solution over $[0,T]$, then the unique~linear feedback Nash equilibrium is given by its solution~\cite{lukes1971equilibrium,bacsar1998dynamic}:
\begin{gather}
    \label{eq:LQ riccati}
    \dot{P} + A^\top P + P A + Q - PSP = 0, \quad P(T) = Q_f, \\
    S = B^1 (R^1)^{-1} B^{1\top} - B^2 (R^2)^{-1} B^{2\top}.
\end{gather}
In this case, the equilibrium feedback strategies and game value are given by
\begin{gather} 
u^i = (-1)^{i} B^{i\top} (R^i)^{-1} P x, \\
J = \|x(0)\|_{P(0)}^2.
\end{gather}
One can typically take both $R^i=I$ without loss of generality, and in much of our paper we consider such systems.


\section{Problem Formulation}
\label{sec:problem_formulation}

\subsection{ZSLQ Game with Asymmetric Type Uncertainty}

Consider again the game from \eqref{eq:zero_sum_LQ.dynamics}-\eqref{eq:zero_sum_LQ.cost}, and take $R_i=I$.
Now, however, assume that P2's control matrix $B^2$ depends on an exogenous random variable $\type$ taking values in $\Theta = \{\type_1, \type_2 \}$.
We assume a common prior $\pi \in \Delta(\Theta)$ over this variable.
Formally, for a given $\type_k$ and open-loop controls $\{u^i(t)\}_{t \in [0,T], i=1,2}$ the realized dynamics~yield:
\begin{gather}
    \label{eq:zero_sum_LQ.dynamics.P2_types}
    \dot{x} = A x + B^1 u^{1} + B^{2}(\type) u^{2} .
\end{gather}
We assume that P2 receives their type $\type_k$ prior to play --- however, P1 does not. From their observations, P1 is able to infer the total input $B^2 u^2$, but not $u^2$ itself, and thus cannot infer $\type$ except insofar as P2's type-dependent strategies result in different trajectories.

To make this precise, we exclusively consider systems with a \emph{private state decomposition},\footnote{This simplifies the exposition and the discussion of the admissible strategy space.} in which the full state decomposes as
$x = (x^1, x^2)$ with
\begin{gather}
    B^1 = \begin{bmatrix} \tilde{B}^1 \\ 0 \end{bmatrix}, \quad
    B^{2}(\type) = \begin{bmatrix} 0 \\ \tilde{B}^{2}(\type) \end{bmatrix}.
\end{gather}
In this setting, each player's control affects only their own state component $x^i$. \textcolor{black}{These components naturally emerge in systems where each player has their own dynamics, e.g. are each physically embodied with their own pose and velocity.} P1 observes the full state $(x(t), \dot{x}(t))$, and thus can reconstruct
$\tilde{B}^{2}(\type) u^2$ from $\dot{x^2}$, but cannot distinguish the control input from its type-specific modifier.
We do, however, impose the following assumption on the two $\type$-specific complete-information subgames defined by (\ref{eq:zero_sum_LQ.cost}) and~(\ref{eq:zero_sum_LQ.dynamics.P2_types}):
\begin{assumption}
    \label{asm:riccati_distinguishability}
    For each $\type_k \in \Theta$, let $P_k$ be the solution to the Riccati equation (\ref{eq:LQ riccati}) under dynamics (\ref{eq:zero_sum_LQ.dynamics.P2_types}). For each $t \in [0,T]$, we assume that 
    \begin{equation}
        B^2(\type_1) B^2(\type_1)^\top P_1(t) x \neq B^2(\type_2) B^2(\type_2)^\top P_2(t) x
    \end{equation}
    for any $x \neq 0$.
\end{assumption}
Additionally, it will simplify our later investigation greatly if we may take the following for granted, as well:
\begin{assumption}
    \label{asm:control_invertability}
    $\tilde{B}^2(\type)$ is invertible for all $t$ and type $\type \in \typeSpace$.
\end{assumption}

If our game considered only open-loop strategies, in which each player commits to their control trajectory at the outset, we could solve this fairly easily as an infinite-dimensional Bayesian game. However, if P1 can observe the trajectory and update their belief, they could discover $\type$ and adapt, which in turn introduces the incentive for P2 to \emph{conceal} their type.
This tension between P1's capacity to infer $\type$ and P2's incentive to conceal it  is the central object of study in this work.
In order to make this precise, we need to clarify (i)~both players' strategy spaces, which in turn depend on the proper definition of their (ii)~information structures, and in particular (iii)~the bookkeeping that accounts for P1's current information regarding $\type$.
We cover all three of these in the next section.

\subsection{Player Information and Strategy Spaces}

A full treatment of the problem we have sketched would require a Perfect Bayesian Equilibrium (PBE) setup \cite{fudenberg1991game}, the theory of which is largely underdeveloped for differential games.
We instead build the simplest strategy and information structures that capture the key informational tension between the players, leaving the full PBE development to future work.

In the usual two-player ZSLQ setting, each player picks a linear feedback strategy $\strategy^i(x(t),t) = K^i(t) x(t)$ for their control $u^i(t) = \strategy^i(x(t),t)$.
Since P2 receives complete information of the game at time $t=0$, we can reasonably reduce their strategy to the choice of one such linear feedback strategy per type.
Formally, let $\mathfrak{L}^i$ be the set of time-continuous linear feedback matrices for P$i$'s control --- i.e. the collection of matrix-valued continuous functions of the form $K^i: [0,T] \to \mathbb{R}^{n \times m^i}$.
Then P2's strategy will live in the space $\tilde{\mathcal{S}}^2 \triangleq \{ \strategy^2: \Theta \to \mathfrak{L}^2 \}$, i.e., where $\strategy^2(x,t;\type) = K^2_{\type} x$ for some appropriate pair of feedback laws $\{ K^2_\type \}_{\type \in \Theta}$.
Thus, P2's decision-making process proceeds as follows (a) pick some suitable $\strategy^2 \in \tilde{\mathcal{S}}^2$, (b) receive $\type$ at $t=0$, (c) execute feedback strategy $\strategy^2( \cdot, \cdot ; \type)$, implicitly reacting to P1's decisions via state feedback.

Let us now turn our attention to P1. How will they perceive their opponent's strategy, both in anticipation and during their own strategy's execution? As mentioned earlier, we take for granted that they can reconstruct $B^2(\type) u^2$. During execution, let $\mu(\type ; t)$ denote their belief over $\Theta$ at time~$t$. Based on the observability mentioned above, it stands to reason that for a given P2 strategy $\strategy^2$, at time $t \in [0,T]$, P1 might find themselves in one of the two following information~states:
\begin{itemize}
    \item If $B^2(\type_1) \strategy^2(x(s),s;\type_1) = B^2(\type_2) \strategy^2(x(s),s;\type_2)$ for all $s \leq t$, then P1 knows no more about $\type$ than they did initially, i.e. $\mu(\cdot ; t) = \pi(\cdot)$.
    \item If there existed an $s \in [0,t]$ where~$B^2(\type_1) \strategy^2(x(s),s;\type_1) $ $ \neq B^2(\type_2) \strategy^2(x(s),s;\type_2)$, then P1 knows $\type = \type^\star$ (the realized type), i.e. $\mu(\type^{\star} ; t) = 1$.
\end{itemize}
This assumes that P1 possesses P2's solution at equilibrium, and is allowed to shape its belief rule based on it.
The former is well-understood in the epistemics of what it means to be at Nash \cite{aumann1995epistemic}, and the latter is actually a requirement of belief functions in the PBE literature: on the equilibrium path, belief functions must be consistent with Bayes' rule as applied to the equilibrium strategies \cite{fudenberg1991game}.

Of course, the astute reader might identify that our description above is too clean. It holds on the equilibrium path, itself, but the equilibrium is defined with respect to P2 defections.
If P2 deviates, P1 may observe a control input that is inconsistent with both equilibrium types, and it is unclear how P1 should update their belief in response. This is precisely why the PBE solution concept requires belief specification on non-equilibrium paths, which is challenging in our differential setting.\footnote{Even if nominally we set the belief on all paths, say based on the relative distance to the two equilibrium paths, how we would go about demonstrating the non-exploitability of such a P1 belief on P2's part is not immediately~obvious.} For this reason, we restrict P2's strategy space to keep the analysis tractable.

Given a better understanding of P1's information structure, we may now define their strategy space. P1's strategy, properly considered, is closed-loop instead of pure feedback, since the belief is a function of the entire history. However, the belief itself is sufficient for the entire previous history, and solely allows P1 to select between the appropriate state-feedback controls. That is, P1 will be playing a strategy of the form:
\begin{equation}
    u^1(t) = \strategy^1(x(t), t; \belief) = K^1_{\belief(t)} \, x(t)
\end{equation}
where $\belief(t) \in \{\prior, \delta_{\type_1}, \delta_{\type_2}\} \triangleq \beliefSpace$ is determined by the observed history $\{B^2(\type) u^2(s)\}_{s \in [0,t]}$ as described above.
Formally, this strategy space is given by $\tilde{\strategySpace}^1 = \{ \strategy^1: \beliefSpace \to \feedbackspace^1 \}$, i.e. a feedback law is chosen for each of the three possible beliefs.

\subsection{The Concealment Interval and Revelation Time}

We have given a semi-formalized account of how P1's belief evolves with the identifiability of P2's type-specific strategies. We now formalize when P1's belief transition actually occurs.

For a choice of P2 strategy $\strategy^2$, we let $K_{\type}^2(t)$ be the associated feedback matrix.
Strictly, it is possible that, for certain states $x$, $B^2(\type_1) K^2_{\type_1}(t) \neq B^2(\type_2) K^2_{\type_2}(t)$ while $B^2(\type_1) K^2_{\type_1}(t) x = B^2(\type_2) K^2_{\type_2}(t) x$, even for $x \neq 0$.
We ignore such degenerate cases, and assume that distinguishability of the matrix operators is equivalent to distinguishability of the resultant controls. With that in mind, for a given strategy $\strategy^2$, we define the \textbf{revelation time} $\criticalTime = \criticalTime(\strategy^2)$ as
\begin{align} 
    \criticalTime &= \min(T, \criticalTime'),\\
    \criticalTime' &= \inf\{ t \in [0,T]~ |~ B^2(\type_1) K^2_{\type_1}(t) \neq B^2(\type_2) K^2_{\type_2}(t) \}.
\end{align}

Observe that, for any $\strategy^2 \in \tilde{\strategySpace}^2$, we have a well-defined $\criticalTime \in [0,T]$.
The idea here is that, for $t \leq \criticalTime$, we have $\mu(\cdot;t)=\pi$, i.e. P1 has not yet identified P2's type.
Moreover, under equilibrium play, as discussed earlier, P1's belief must become certain of the true $\type$ in order for the belief to be consistent.

\subsection{Ex-Ante Nash Equilibrium}

We are almost ready to define the equilibrium concept under investigation: that of players at the root of the game tree, before types have been assigned.
To do so, we need to formalize the notion of belief-consistency within our~setting.

\begin{definition}[Observational compatibility]
\label{defn:observational_compatibility}
A belief rule $\belief$ is \emph{observationally compatible} with a P2 strategy
$\strategy^2 \in \tilde{\strategySpace}^2$ if, for any history $h_t = \{x(s), \dot{x}(s)\}_{s \in [0,t]}$
generated by some $(\strategy^1, x(0), \type)$:
\begin{enumerate}
    \item[(i)] If $B^2(\type_1) K^2_{\type_1}(s) = B^2(\type_2) K^2_{\type_2}(s)$
    for all $s \leq t$, then $\belief(\cdot\,; t) = \prior(\cdot)$.
    \item[(ii)] If there exists $s \leq t$ such that
    $B^2(\type_1) K^2_{\type_1}(s) \neq B^2(\type_2) K^2_{\type_2}(s)$,
    then $\belief(\type^\star; t) = 1$, where $\type^\star$ is the
    realized type.
\end{enumerate}
\end{definition}

The following is an immediate consequence of this and the definition of $\criticalTime$:
\begin{proposition}
\label{prop:obs_compat_implies_on_path}
If $\belief$ is observationally compatible with $\strategy^2$, then
for arbitrary $\strategy^1 \in \tilde{\strategySpace}^1$, $x(0)$, and
realized type $\type^\star \in \typeSpace$, any history generated by
$(\strategy^1, \strategy^2, x(0), \type^\star)$ satisfies:
\begin{enumerate}
    \item[(i)] For $t < \criticalTime(\strategy^2)$,
    $\belief(\cdot\,; t) = \prior(\cdot)$.
    \item[(ii)] For $t \geq \criticalTime(\strategy^2)$,
    $\belief(\type^\star; t) = 1$.
\end{enumerate}
\end{proposition}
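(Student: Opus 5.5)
The plan is to unwind the definition of $\criticalTime(\strategy^2)$ and match each of its two regimes to the corresponding clause of Definition~\ref{defn:observational_compatibility}. Write $D(s) \triangleq B^2(\type_1) K^2_{\type_1}(s) - B^2(\type_2) K^2_{\type_2}(s)$, which is continuous on $[0,T]$ because the feedback matrices lie in $\feedbackspace^2$. The key preliminary point is that $D$ depends only on $\strategy^2$, and not on $\strategy^1$, $x(0)$, or $\type^\star$. So the hypotheses of clauses (i) and (ii) of the definition are statements about $\strategy^2$ alone. Whichever history $(\strategy^1,\strategy^2,x(0),\type^\star)$ generates, observational compatibility pins down $\belief(\cdot\,;t)$ from $D$ restricted to $[0,t]$.

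For part (i), fix $t < \criticalTime$. Then $t < \criticalTime'$, since $\criticalTime \le \criticalTime'$. By the definition of the infimum, $D(s) = 0$ for every $s \le t$, because any $s \le t$ with $D(s) \neq 0$ would give $\criticalTime' \le s < \criticalTime'$. Clause (i) of Definition~\ref{defn:observational_compatibility} then gives $\belief(\cdot\,;t) = \prior$ directly.

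For part (ii), first take $t > \criticalTime$. Then $\criticalTime = \criticalTime' < t$, so the infimum property yields some $s \in [\criticalTime', t)$ with $D(s) \neq 0$. Clause (ii) of the definition then gives $\belief(\type^\star;t)=1$.

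The main obstacle is the boundary instant $t = \criticalTime$, together with the degenerate case where $D \equiv 0$. In that degenerate case $\criticalTime' = \inf\emptyset$, so $\criticalTime = T$ and no revelation ever occurs. Because $D$ is continuous, the set $\{D \neq 0\}$ is relatively open in $[0,T]$. Its infimum is therefore attained only when $\criticalTime' = 0$; otherwise $D(\criticalTime') = 0$ and $D$ vanishes on all of $[0,\criticalTime']$. Read literally, then, clause (i) of the definition would force $\belief(\cdot\,;\criticalTime) = \prior$ at the boundary.

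I would handle this by making explicit the convention implicit in the paper: P1 observes $(x,\dot x)$, and the belief update occurs at the revelation instant, i.e. $\belief(\cdot\,;\criticalTime)$ is taken as the right limit $\lim_{t\downarrow \criticalTime}\belief(\cdot\,;t)$. Equivalently, the "$\ge$" in (ii) is understood up to this single instant, which has measure zero and does not affect any cost integral. In the non-revealing case $\criticalTime = T$, statement (ii) concerns only the terminal instant, where P1's control is immaterial, so I would state it under the same convention or exclude it. With that convention fixed, the boundary case follows from the $t > \criticalTime$ argument by taking limits, and the proposition is proved.
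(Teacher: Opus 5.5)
Your proof is correct and is the same definitional unwinding the paper has in mind when it calls this proposition an immediate consequence of Definition~\ref{defn:observational_compatibility} and the definition of $\criticalTime$ (the paper gives no further argument), and your cases $t<\criticalTime$ and $t>\criticalTime$ are handled properly. Your boundary analysis is also right and catches something the paper glosses over: when $D$ is continuous (this needs continuous $B^2(\type_k)$ as well as continuous feedback, as in the paper's example) and $\criticalTime>0$, including the non-revealing case $\criticalTime=T$, $D$ vanishes on all of $[0,\criticalTime]$, so clause (i) of the definition forces $\belief(\cdot\,;\criticalTime)=\prior$ and (ii) fails at that single instant for a nondegenerate prior; your right-limit convention (equivalently, reading (ii) for $t>\criticalTime$) is the appropriate repair, and it is consistent with how the paper uses the result in the proof of Proposition~\ref{prop:equilibrium_structure}, which only invokes it on $[\criticalTime+\epsilon,T]$ and then passes to the limit by continuity.
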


\begin{remark}
    Observational compatibility can be thought of as our case-specific variant of the on-path consistency condition that the PBE literature requires of beliefs, at least for equilibrium strategies.
\end{remark}

\begin{remark}
    Note that, if a belief is observationally compatible with strategy $\strategy^2$, then if some $\strategy^{2 \prime}$ agrees with $\strategy^2$ up to some $t>\criticalTime(\strategy^2)$, then the belief is observationally compatible with $\strategy^{2\prime}$ as well.
\end{remark}

We may now define our primary solution concept. For a given strategy profile {\small$(\strategy^1, \strategy^2)$} and realized type $\type$, let {\small$J(\strategy^1, \strategy^2; \type)$} denote the cost (\ref{eq:zero_sum_LQ.cost}) under the dynamics (\ref{eq:zero_sum_LQ.dynamics.P2_types}), and let {\small $\mathbb{E}_\type[J] = \sum_k \prior_k \, J(\strategy^1, \strategy^2; \type_k)$} denote the ex-ante expected cost.

\begin{definition}[Ex-Ante Nash Equilibrium]
\label{def:exAnteNash}
Consider a strategy profile $(\strategy^1, \strategy^2) \in \tilde{\strategySpace}^1 \times \tilde{\strategySpace}^2$
for which there exists a belief rule $\belief$ that is observationally compatible
with $\strategy^2$. We say that this profile is an \emph{ex-ante Nash equilibrium}
if, for all alternative strategies $\hat{\strategy}^1 \in \tilde{\strategySpace}^1$
and $\hat{\strategy}^2 \in \tilde{\strategySpace}^2$,
\begin{subequations}
\label{eqs:exAnteNash}
\begin{align}
    \label{eq:ex_ante_nash.p1}
    \mathbb{E}_\type\left[ J(\hat{\strategy}^1, \strategy^2; \belief) \right]
    &\geq
    \mathbb{E}_\type\left[ J(\strategy^1, \strategy^2; \belief) \right], \\
    \label{eq:ex_ante_nash.p2}
    \mathbb{E}_\type\left[ J(\strategy^1, \hat{\strategy}^2; \belief) \right]
    &\leq
    \mathbb{E}_\type\left[ J(\strategy^1, \strategy^2; \belief) \right].
\end{align}
\end{subequations}
\end{definition}
A consequence of this framework is that we may restrict our attention to better understood subsets of the strategy space.

\begin{proposition}
\label{prop:equilibrium_structure}
Let $(\strategy^1, \strategy^2)$ be an ex-ante Nash equilibrium
with observationally compatible belief $\belief$, and let
$\criticalTime = \criticalTime(\strategy^2)$. Then:
\begin{enumerate}
    \item[(i)] For $t < \criticalTime$, both types of P2 play a
    common feedback law, i.e.
    $B^2(\type_1) K^2_{\type_1}(t) = B^2(\type_2) K^2_{\type_2}(t)$.
    \item[(ii)] For $t \geq \criticalTime$, the strategies coincide
    with the unique complete-information linear feedback equilibrium:
    for each realized type $\type_k$,
    \begin{align}
        K^2_{\type_k}(t) &= B^{2}(\type_k)^\top P_k(t), \\
        K^1_{\delta_{\type_k}}(t) &= -B^{1\top} P_k(t),
    \end{align}
    where $P_k$ solves the Riccati equation
    \emph{(\ref{eq:LQ riccati})} with $B^2 = B^2(\type_k)$
    on $[\criticalTime, T]$.
\end{enumerate}
We allow $\strategySpace^2 \subset \tilde{\strategySpace}^2$ to denote the strategy subspace satisfying (i) and (ii), and $\strategySpace^1 \subset \tilde{\strategySpace}^1$ satisfying (ii).
\end{proposition}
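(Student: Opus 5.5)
Part (i) follows directly from the definition of the revelation time $\criticalTime$. Since $\criticalTime' = \inf\{ t : B^2(\type_1) K^2_{\type_1}(t) \neq B^2(\type_2) K^2_{\type_2}(t)\}$, every $t < \criticalTime \le \criticalTime'$ lies strictly below the infimum. So the two type-specific feedback operators coincide there, and no equilibrium argument is needed. By Proposition~\ref{prop:obs_compat_implies_on_path}, the belief on this interval is $\prior$, so P1 is playing $K^1_{\prior}$ there.

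For part (ii), the plan is a subgame/dynamic-programming argument on $[\criticalTime, T]$. By Proposition~\ref{prop:obs_compat_implies_on_path}(ii), for $t \ge \criticalTime$ and realized type $\type_k$, P1's belief is $\delta_{\type_k}$, so P1 plays $K^1_{\delta_{\type_k}}$. The ex-ante expected cost then splits as
\[
\mathbb{E}_\type[J] = \textstyle\sum_k \prior_k \big( J_{[0,\criticalTime]}(\type_k) + J_{[\criticalTime,T]}(\type_k; x_k(\criticalTime)) \big),
\]
where the tail term depends only on the pair $(K^1_{\delta_{\type_k}}, K^2_{\type_k})$ restricted to $[\criticalTime,T]$ and on the state reached at $\criticalTime$.

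\textbf{P2 deviations on the tail.} Fix $k$ and consider deviations $\hat\strategy^2$ that agree with $\strategy^2$ on $[0,\criticalTime]$ and on the other type, and modify only $K^2_{\type_k}$ on $(\criticalTime,T]$. By the Remark following Proposition~\ref{prop:obs_compat_implies_on_path}, $\belief$ remains observationally compatible with $\hat\strategy^2$. Moreover, $\criticalTime(\hat\strategy^2)$ does not increase past the original revelation, so P1 still holds belief $\delta_{\type_k}$. Hence condition \eqref{eq:ex_ante_nash.p2} reduces, after cancelling the common prefix and the other type's cost and dividing by $\prior_k > 0$, to a best-response condition for P2 in the complete-information ZSLQ subgame on $[\criticalTime,T]$ with initial state $x_k(\criticalTime)$.

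\textbf{P1 deviations on the tail.} Symmetrically, P1 can alter only $K^1_{\delta_{\type_k}}$. This leaves every other component, and hence the prefix and the other type's cost, unchanged, so \eqref{eq:ex_ante_nash.p1} reduces to P1's best-response condition in the same subgame.

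Since these conditions must hold for arbitrary $x(0)$, and therefore for the reachable states $x_k(\criticalTime)$, the tail pair is a linear feedback Nash equilibrium of the complete-information game. By the uniqueness result quoted in Section~\ref{sec:preliminaries}, given existence of $P_k$ under Assumption~\ref{asm:riccati_distinguishability}, it equals $K^2_{\type_k} = B^2(\type_k)^\top P_k$ and $K^1_{\delta_{\type_k}} = -B^{1\top}P_k$.

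\textbf{Main obstacle.} The delicate step is making the "arbitrary initial state" argument rigorous. Feedback equilibrium uniqueness is a statement over all initial states at every time $t \ge \criticalTime$, but ex-ante optimality only constrains the reached states $x_k(\criticalTime)$ from a given $x(0)$. The first attempt is to use that $x(0)$ is arbitrary and that the flow on $[0,\criticalTime]$ is linear and invertible (a transition matrix), so $x_k(\criticalTime)$ ranges over all of $\mathbb{R}^n$. One then applies the same reasoning at intermediate times $t > \criticalTime$ by considering deviations only after $t$; continuity of the feedback matrices extends the conclusion to $t = \criticalTime$ itself.

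A secondary subtlety is that a P2 deviation must not change $\criticalTime$ in a way that alters P1's belief. Restricting deviations to times after $\criticalTime$, where $\belief$ is already degenerate, avoids this.
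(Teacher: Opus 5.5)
Your skeleton matches the paper's. You get part (i) from the definition of $\criticalTime$. You then embed unilateral tail deviations on the $\type_k$-path into ex-ante deviations, invoke uniqueness of the complete-information linear feedback equilibrium, and pass to $t=\criticalTime$ by continuity. The step that fails as written is your treatment of P2 deviations that modify $K^2_{\type_k}$ on all of $(\criticalTime,T]$.

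The Remark you cite requires $\hat\strategy^2$ to agree with $\strategy^2$ up to some time \emph{strictly greater} than $\criticalTime$. Your claim that $\criticalTime(\hat\strategy^2)$ does not increase is false. Since $\criticalTime$ is only an infimum, type $k$ can set $\hat K^2_{\type_k}=(\tilde B^2(\type_k))^{-1}\tilde B^2(\type_j)K^2_{\type_j}$ on $(\criticalTime,\criticalTime+\delta)$, as Assumption~\ref{asm:control_invertability} permits. (With continuous gains and $\criticalTime>0$, the two operators still agree at $\criticalTime$ itself.) This erases every separating instant near $\criticalTime$ and pushes revelation to at least $\criticalTime+\delta$.

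Worse, the type-$k$ history then coincides with type $j$'s on-path history. So a history-based belief compatible with $\strategy^2$ assigns $\delta_{\type_j}$ rather than $\delta_{\type_k}$. For such deviations, \eqref{eq:ex_ante_nash.p2} does not reduce to a best-response condition in the $\type_k$ complete-information subgame; this is exactly the mimicry incentive the paper defers to ex-interim analysis. For the same reason, your closing claim that restricting deviations to times after $\criticalTime$ avoids the belief problem is incorrect.

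The repair is the paper's $\epsilon$-buffer. Fix $\epsilon>0$ and allow deviations only on $(\criticalTime+\epsilon,T]$. Then $\hat\strategy^2$ agrees with $\strategy^2$ past $\criticalTime$, the revelation time is unchanged, and P1's belief on the $\type_k$-path is certainly $\delta_{\type_k}$. Hence the tail pair must be the complete-information equilibrium on $[\criticalTime+\epsilon,T]$, and you let $\epsilon\to 0$ by continuity. Your ``intermediate times $t>\criticalTime$'' step already contains this device. You introduced it for the initial-state issue, but it is actually what handles the belief issue.

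Your transition-matrix observation is a genuine refinement that the paper leaves implicit when it cites uniqueness. Writing $x_k(\criticalTime+\epsilon)=\Phi_k(\criticalTime+\epsilon,0)\,x(0)$, with $\Phi_k$ the invertible closed-loop transition matrix, this state ranges over all of $\mathbb{R}^n$ as $x(0)$ varies. So ex-ante optimality along reached states gives subgame optimality from every state, and this settles the initial-state issue on its own at each fixed starting time.
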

This dramatically simplifies the search space for potential ex-ante equilibria. We defer the proof to Section~\ref{subsec:approach.simplify_strategies}.

However, once P2 learns $\type$, they may prefer to deviate from the pooling arrangement.
Analysis of \textcolor{black}{ex-interim} type-rationality for prospective ex-ante equilibria extends naturally from our framework, though we defer it to future work.


\section{Approach}
\label{sec:approach}

\subsection{Simplification of Strategy Spaces}
\label{subsec:approach.simplify_strategies}

Our entire analysis depends on the equilibrium simplification provided by Proposition~\ref{prop:equilibrium_structure}, and so we begin with its proof.

\begin{proof}[Proof of Proposition~\ref{prop:equilibrium_structure}]
Let our equilibrium $(\strategy^1,\strategy^2)$ be given, satisfying the conditions of Prop.~\ref{prop:equilibrium_structure}.
Part (i) is immediate from the definition of $\criticalTime$.
For part (ii), let $\type_k$ be given and pick $\epsilon > 0$ arbitrary.
For $t \geq \criticalTime + \epsilon$, the belief satisfies $\belief(\type_k; t) = 1$ by Prop.~\ref{prop:obs_compat_implies_on_path}, on the history $h_t$ generated when $\type_k$ is realized.
Consider the restriction of the equilibrium feedback laws $(K^1_{\delta_{\type_k}}(t),\, K^2_{\type_k}(t))$ to $[\criticalTime + \epsilon,\, T]$ --- that is, the feedback laws employed on the realized path where $\type = \type_k$ and $\belief = \delta_{\type_k}$.

We claim these must constitute a Nash equilibrium for the complete-information LQ game on $[\criticalTime + \epsilon, T]$ with $B^2 = B^2(\type_k)$.
Suppose not: then some player $i$ has an improving deviation $K^{i\prime}(t)$ on this interval.
Construct a full ex-ante strategy agreeing with $\strategy^i$ on $[0, \criticalTime + \epsilon]$ and replacing the $\type_k$-path feedback law with $K^{i\prime}$ for $t > \criticalTime + \epsilon$, while leaving all other belief- or type-contingent feedback laws unchanged.
Since the cost on $[0, \criticalTime + \epsilon]$ is unchanged and the belief is certain on this path for $t \geq \criticalTime + \epsilon$, the improvement on the $\type_k$-path carries through to $\mathbb{E}_\theta[J]$ (weighted by $\pi_k > 0$), contradicting \eqref{eq:ex_ante_nash.p1} or \eqref{eq:ex_ante_nash.p2}.
By uniqueness of the continuous linear feedback Nash equilibrium for complete-information LQ games \cite{lukes1971equilibrium,bacsar1998dynamic}, we conclude $K^2_{\type_k}(t) = B^2(\type_k)^\top P_k(t)$ and $K^1_{\delta_{\type_k}}(t) = -B^{1\top} P_k(t)$.
Since $\epsilon > 0$ was arbitrary, this holds on $[\criticalTime, T]$ by continuity of the feedback laws.
\end{proof}

The logic here enables us to break apart all prospective equilibria strategies into two parts: the \textit{asymmetric information regime}, in which both P2 types are constrained to `collaborate,' and then the \textit{complete information regime}, in which they give the game away, so-to-speak.

\subsection{Game Decomposition}

Given the strategy restrictions from Proposition~\ref{prop:equilibrium_structure}, we can see that the cost functionals naturally decompose into two stages of the dynamic portion of the game. For shorthand, let $g(x,u,t) = \left\| x \right\|^2_{Q(t)} + \left\| u^1 \right\|^2 - \left\| u^2 \right\|^2$, and we will also abbreviate $x_{\theta} \triangleq x(t;\theta),$ similarly $u_{\theta}=(u^1_{\theta}, u^2_{\theta})$, as to be defined below.
For any $(\strategy^1,\strategy^2) \in \strategySpace^1 \times \strategySpace^2$, we have that
\begin{align}
    \label{eq:approach.pooled_cost}
    \bar{J}(\strategy^1, \strategy^2) &\triangleq \mathbb{E}_{\theta} J(\strategy^1, \strategy^2;\theta) \\
    \nonumber J(\strategy^1, \strategy^2;\type_k) & = \int_0^{\criticalTime} g(x_{\type_k}, u_{\type_k}, t) \, dt \\
    \label{eq:approach.stage_decomposed_cost}
    &\quad + \int_{\criticalTime}^{T} g(x_{\type_k}, u_{\type_k}, t) \, dt + \left\| x_{\type_k}(T) \right\|^2_{Q_f}
   \end{align}
   where $x_{\type}(t)$ is generated from $u_{\type}$ according to (\ref{eq:zero_sum_LQ.dynamics.P2_types}), and 
   \begin{align} 
    u^i_{\type_k} = 
    \begin{cases}
        K^1_\prior(t) \, x(t), & i=1, t \in [0,\criticalTime), \\
       K^2_{\type_k}(t) \, x(t), & i=2, t \in [0,\criticalTime), \\
        (-1)^i B^i(\type)^\top P_k(t) \, x(t), & t \in [\criticalTime,T],
    \end{cases} \\[6pt]
    \label{eq:decomposition.observability_constraint}
    B^2(\type_1) K^2_{\type_1}(t) = B^2(\type_2) K^2_{\type_2}(t), \quad t \in [0,\criticalTime).
\end{align}
Equation (\ref{eq:decomposition.observability_constraint}), as a constraint, directly emerges from the assumption of observational compatibility of $\belief$ and the definition of $\criticalTime$. This will allow for two great simplifications in our decision space.

First, all P1 and P2 strategies in the restricted space are identical for $t>\criticalTime(\strategy^2)$: they're playing the complete information game past that (strategy-specific) time-point.
Thus, we can reduce that portion of the game to the value of the complete information game:
\begin{equation}
    \label{eq:stage2_cost}
    \left\| x_{\type_k}(\criticalTime) \right\|_{P_k(\criticalTime)}^2 = \int_{\criticalTime}^{T} g(x_{\type_k}, u_{\type_k}, t) \, dt + \left\| x_{\type_k}(T) \right\|^2_{Q_f}.
\end{equation}
Thus, once $\criticalTime$ becomes fixed, the only meaningful choice left to P1 or P2 is how they will behave over $t \in [0,\criticalTime]$. We refer to this pre-$\criticalTime$ portion of the game as \textbf{Stage 1}, i.e., the first integral in the RHS of  (\ref{eq:approach.stage_decomposed_cost}), and the latter, complete information regime given by (\ref{eq:stage2_cost}) as \textbf{Stage 2}.\footnote{Note that these stages only describe on-path behavior.}

This leads us to the second major simplification we can make. Recall that $B^2(\type_k) = [0; \tilde{B}^2(\type_k)]$ and $\tilde{B}^2(\type_k)$ is invertible for each $\type_k$.
We may rewrite (\ref{eq:decomposition.observability_constraint}) only with the lower blocks, i.e. $\tilde{B}^2(\type_1) K^2_{\type_1}(t) = \tilde{B}^2(\type_2) K^2_{\type_2}(t)$, and then proceed to multiply through by $(\tilde{B}^2(\type_2))^{-1}$, giving us
\begin{equation}
    K^2_{\type_2}(t) = \left( \tilde{B}^2(\type_2) \right)^{-1} \tilde{B}^2(\type_1) K^2_{\type_1}(t).
\end{equation}
This tells us, then, that P2's choice of $K^2_{\type_2}(t)$ is entirely determined by their choice of $K^2_{\type_1}(t)$ for $t\in[0,\criticalTime)$. Pulling these two simplifications together, we may reduce each player's relevant decision space further.

\begin{proposition}
\label{prop:reduced_decisions}
Let $(\strategy^1, \strategy^2) \in \strategySpace^1 \times \strategySpace^2$.
Then the strategy pair is fully characterized by:
\begin{enumerate}
    \item[(i)] P2's choice of revelation time $\criticalTime = \criticalTime(\strategy^2)$ and feedback law $K^2_{\type_1}(t)$ for $t \in [0,\criticalTime)$.
    \item[(ii)] P1's choice of feedback law $K^1_\prior(t)$ for $t \in [0,T]$.
\end{enumerate}
All remaining components of the strategy pair are determined by Proposition~\ref{prop:equilibrium_structure}, except the off-path $\{K^1_{\delta_k}(t)\}_{t \in [0,\criticalTime)}$ (see Remark below).
\end{proposition}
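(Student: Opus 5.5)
The plan is a direct bookkeeping argument: enumerate every component of a strategy pair in $\tilde{\strategySpace}^1 \times \tilde{\strategySpace}^2$ and show that, once restricted to $\strategySpace^1 \times \strategySpace^2$, each component is either one of the listed free choices or is pinned down by Proposition~\ref{prop:equilibrium_structure} and the observability constraint (\ref{eq:decomposition.observability_constraint}).

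A P2 strategy is the pair $\{K^2_{\type_1}, K^2_{\type_2}\}$ on $[0,T]$. A P1 strategy is the triple $\{K^1_\prior, K^1_{\delta_{\type_1}}, K^1_{\delta_{\type_2}}\}$ on $[0,T]$.

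\textbf{Step 1: P2 after $\criticalTime$.} Fix $\strategy^2 \in \strategySpace^2$ and set $\criticalTime = \criticalTime(\strategy^2)$. On $[\criticalTime, T]$, part (ii) of Proposition~\ref{prop:equilibrium_structure} gives $K^2_{\type_k} = B^2(\type_k)^\top P_k$. Since $P_k$ depends only on the data of the type-$k$ complete-information game on $[\criticalTime,T]$ with terminal condition $Q_f$, these laws are determined once $\criticalTime$ is known.

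\textbf{Step 2: P2 before $\criticalTime$.} On $[0,\criticalTime)$, part (i) of Proposition~\ref{prop:equilibrium_structure} gives the pooling constraint. Restricting to the lower blocks and using Assumption~\ref{asm:control_invertability}, we get $K^2_{\type_2} = (\tilde{B}^2(\type_2))^{-1}\tilde{B}^2(\type_1) K^2_{\type_1}$, as derived just above. Hence $K^2_{\type_1}|_{[0,\criticalTime)}$ together with $\criticalTime$ determines all of $\strategy^2$.

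\textbf{Step 3: consistency of the parametrization.} Conversely, take any $\criticalTime$ and any continuous $K^2_{\type_1}$ on $[0,\criticalTime)$. The construction above yields an element of $\strategySpace^2$, provided its revelation time really equals $\criticalTime$. This is the point I expect to be delicate.

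One needs the pooled laws to stop coinciding immediately after $\criticalTime$. That holds because for $t \ge \criticalTime$ we have $B^2(\type_1)B^2(\type_1)^\top P_1(t) \neq B^2(\type_2)B^2(\type_2)^\top P_2(t)$ by Assumption~\ref{asm:riccati_distinguishability}, so by definition of $\criticalTime'$ the infimum is exactly $\criticalTime$. Here I would take care that $P_k$ solved on $[\criticalTime,T]$ agrees with the full-horizon $P_k$, which is true since the Riccati equation has a terminal condition only.

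Continuity of the feedback laws at $\criticalTime$ is a further subtlety. I would either note that the pooled law must match in the limit, or treat the law as piecewise continuous, consistent with how the paper defines $\criticalTime$ via an infimum.

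\textbf{Step 4: P1.} For $\strategy^1 \in \strategySpace^1$, part (ii) of Proposition~\ref{prop:equilibrium_structure} fixes $K^1_{\delta_{\type_k}} = -B^{1\top}P_k$ on $[\criticalTime,T]$. This leaves $K^1_\prior$ on $[0,T]$ free.

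By Proposition~\ref{prop:obs_compat_implies_on_path}, the prior belief is only active on $[0,\criticalTime)$ along realized paths, but $K^1_\prior$ remains formally part of the strategy on all of $[0,T]$. Finally, $K^1_{\delta_{\type_k}}$ on $[0,\criticalTime)$ is never invoked on any history generated under an observationally compatible belief, since the belief there equals $\prior$. It is therefore the unconstrained off-path component named in the statement.

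Collecting Steps 1--4 gives exactly items (i) and (ii) plus the stated exception.
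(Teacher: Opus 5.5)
Your Steps 1, 2 and 4 are the paper's argument: Proposition~\ref{prop:equilibrium_structure}(ii) fixes every law on $[\criticalTime,T]$, and the pooling constraint on the lower blocks plus invertibility of $\tilde{B}^2(\type_2)$ gives $K^2_{\type_2}=(\tilde{B}^2(\type_2))^{-1}\tilde{B}^2(\type_1)K^2_{\type_1}$ on $[0,\criticalTime)$. That leaves only $K^1_\prior$ on $[0,T]$ and the unconstrained off-path $K^1_{\delta_{\type_k}}$ on $[0,\criticalTime)$.

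Your Step 3, the converse that every $(\criticalTime, K^2_{\type_1})$ is realized, goes beyond what the paper proves and is not needed. Its first option also fails: the two pooled laws share a left limit at $\criticalTime$, while Assumption~\ref{asm:riccati_distinguishability} gives $B^2(\type_1)B^2(\type_1)^\top P_1(\criticalTime)\neq B^2(\type_2)B^2(\type_2)^\top P_2(\criticalTime)$, so only the piecewise-continuous reading is consistent; for continuous $B^2$, strict continuity on $[0,T]$ would exclude every $\criticalTime>0$.
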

\begin{remark}
    P1's off-path complete information feedback laws $K^1_{\delta_{\theta_k}}(t)$ for $t < \criticalTime$ do not affect the ex-ante equilibrium conditions --- unlike the off-path pooling law $K^1_\prior(t)$ for $t \geq \criticalTime$, which can affect them.
For ex-ante analysis, we may arbitrarily take them as $K^1_{\delta_{\theta_k}} = -B^{1\top} P_k(t)$ for all $t \in [0,T]$.
\end{remark}

\subsection{Phase 1 ZSLQ Game}

Given Proposition~\ref{prop:equilibrium_structure}, we may now approach the search~for ex-ante equilibria as merely solving the Stage 1 game described in the previous section.
As established, $\criticalTime$ may be viewed as a component of P2's decision. Suppose that $\criticalTime$ is fixed, and the two players are simultaneously choosing strategies within $\strategySpace^1 \times \strategySpace^2_{\criticalTime}$, where $\strategySpace^2_{\criticalTime'} = \{\strategy^2 \in \strategySpace^2 ~|~ \criticalTime(\strategy^2) = \criticalTime' \}$.
We can then characterize the resulting equilibrium.

\begin{proposition}
\label{prop:stage1_LQ}
Fix $\criticalTime \in [0,T]$ and consider the ex-ante Nash equilibrium conditions \eqref{eq:ex_ante_nash.p1}--\eqref{eq:ex_ante_nash.p2} restricted to $\strategySpace^1 \times \strategySpace^2_{\criticalTime}$.
Then the ex-ante expected cost $\bar{J}(\criticalTime)$ is equal to the value of a ZSLQ game on $[0,\criticalTime]$ with dynamics and cost
\begin{equation}
    \label{eq:stage1_dynamics}
    \dot{x} = Ax + B^1 u^1 + B^2(\type_1) u^2
\end{equation}
\begin{equation}
    \label{eq:stage1_cost}
    \bar{J}(\criticalTime) = \int_0^{\criticalTime} \left\| x \right\|_Q^2 + \left\| u^1 \right\|^2 - \left\| u^2 \right\|_{\tilde{R}^2}^2 \, dt + \left\| x(\criticalTime) \right\|_{\bar{P}(\criticalTime)}^2
\end{equation}
where $u^1 = K^1_\prior(t) x$, $u^2 = K^2_{\type_1}(t) x$, and
\begin{gather}
    \label{eq:R_tilde}
    \tilde{R}^2 = \pi_1 I + \pi_2 \Lambda^\top \Lambda, \quad \Lambda = (\tilde{B}^2(\type_2))^{-1} \tilde{B}^2(\type_1), \\
    \label{eq:P_bar}
    \bar{P}(\criticalTime) = \sum_k \pi_k P_k(\criticalTime).
\end{gather}
The solution is given by the same form as the Riccati equation~(\ref{eq:LQ riccati}), with $R^1=I$, $R^2 = \tilde{R}^2$, $B^2 = B^2(\type_1)$, $Q_f = \bar{P}(\criticalTime)$, and using $M$ instead:
\small{\begin{gather}
    \label{eq:M_riccati} 
    \dot{M} + A^\top M + MA + Q - M\tilde{S}M = 0, \quad M(\criticalTime) = \bar{P}(\criticalTime), \\
    \tilde{S} = B^1 B^{1\top} - B^2(\type_1)(\tilde{R}^2)^{-1} B^2(\type_1)^\top.
\end{gather}}
The Stage 1 equilibrium feedback strategies are
\begin{gather}
    K^1_\prior(t) = -B^{1\top} M(t), \\
    K^2_{\type_1}(t) = (\tilde{R}^2)^{-1} B^2(\type_1)^\top M(t),
\end{gather}
and the equilibrium value is $\bar{J}(\criticalTime) = \left\| x(0) \right\|_{M(0)}^2$.
\end{proposition}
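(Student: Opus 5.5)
The plan is to show that, once $\criticalTime$ is fixed and strategies are restricted to $\strategySpace^1 \times \strategySpace^2_{\criticalTime}$, the ex-ante expected cost $\bar{J}$ collapses to the cost of a single, type-free ZSLQ game on $[0,\criticalTime]$. The standard complete-information result (\ref{eq:LQ riccati}), with general weights $R^1, R^2$, then supplies the equilibrium. I would use Proposition~\ref{prop:reduced_decisions} to identify the decision variables. These are $K^1_\prior$ on $[0,\criticalTime)$ for P1, since $K^1_\prior$ for $t \geq \criticalTime$ is off-path and does not enter the on-path cost, and $K^2_{\type_1}$ on $[0,\criticalTime)$ for P2, with $K^2_{\type_2} = \Lambda K^2_{\type_1}$ forced by (\ref{eq:decomposition.observability_constraint}).

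\textbf{Step 1: common trajectory.} On $[0,\criticalTime)$ the belief is $\prior$ under both types by Proposition~\ref{prop:obs_compat_implies_on_path}, so P1 applies $K^1_\prior x$ in both worlds. The pooling constraint gives $B^2(\type_2) K^2_{\type_2} = B^2(\type_1) K^2_{\type_1}$. Hence $x_{\type_1}$ and $x_{\type_2}$ solve the same linear ODE,
\begin{equation*}
\dot{x} = (A + B^1 K^1_\prior + B^2(\type_1) K^2_{\type_1})\,x,
\end{equation*}
from the same $x(0)$. By uniqueness of ODE solutions they coincide on $[0,\criticalTime]$, and this common trajectory obeys (\ref{eq:stage1_dynamics}) with $u^2 = K^2_{\type_1} x$.

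\textbf{Step 2: averaging the cost.} I substitute into (\ref{eq:approach.stage_decomposed_cost}) and use (\ref{eq:stage2_cost}) for Stage 2. The state and $u^1$ terms are identical across types. Writing $u^2_{\type_2} = \Lambda u^2_{\type_1}$ gives
\begin{equation*}
\pi_1\|u^2_{\type_1}\|^2 + \pi_2\|\Lambda u^2_{\type_1}\|^2 = \|u^2_{\type_1}\|^2_{\tilde{R}^2},
\end{equation*}
and the terminal terms average to $\sum_k \pi_k \|x(\criticalTime)\|^2_{P_k(\criticalTime)} = \|x(\criticalTime)\|^2_{\bar{P}(\criticalTime)}$. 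This yields exactly (\ref{eq:stage1_cost}). I would also note that $\tilde{R}^2 \succeq \pi_1 I \succ 0$ because $\pi_1>0$, and that $\bar{P}(\criticalTime)$ is symmetric, so the reduced game is a well-posed ZSLQ game. Time-dependence of $\Lambda$ is harmless, since the system matrices are only required to be piecewise continuous.

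\textbf{Step 3: equivalence of equilibrium conditions.} Every admissible unilateral deviation in (\ref{eq:ex_ante_nash.p1})--(\ref{eq:ex_ante_nash.p2}) within the restricted spaces is a change of $K^1_\prior$ or $K^2_{\type_1}$ on $[0,\criticalTime)$. Because $\bar{J}$ equals the reduced cost identically for every such pair, the ex-ante Nash conditions are exactly the linear-feedback saddle-point conditions of the reduced game. Applying the Riccati characterization with $R^1=I$, $R^2=\tilde{R}^2$, $B^2 = B^2(\type_1)$ and $Q_f = \bar{P}(\criticalTime)$ gives (\ref{eq:M_riccati}). The feedback gains follow from $u^i = (-1)^i (R^i)^{-1} B^{i\top} M x$, which produces the factor $(\tilde{R}^2)^{-1}$ for P2, and the value is $\|x(0)\|^2_{M(0)}$.

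\textbf{Main obstacle.} The step I expect to be most delicate is the boundary at $\criticalTime$. The stage-1 feedback is only defined on the half-open interval $[0,\criticalTime)$, while Stage 2 starts at $\criticalTime$. I would argue that the values of the gains at the single point $\criticalTime$ do not affect trajectories or costs, so the decomposition and the terminal weight are unaffected. A second point is that the reduced game's own equilibrium gains satisfy the continuity requirement in $\feedbackspace^i$. A third is that existence of $M$ on $[0,\criticalTime]$ (no finite escape) must be assumed, just as existence of $P$ is assumed in Section~\ref{sec:preliminaries}. Finally, I would check that P2's revelation at $\criticalTime$, switching from $K^2_{\type_1}$ to $B^2(\type_k)^\top P_k$, genuinely produces $\criticalTime(\strategy^2)=\criticalTime$. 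This is supplied by Assumption~\ref{asm:riccati_distinguishability}, which guarantees that the two complete-information laws are distinguishable immediately after $\criticalTime$.
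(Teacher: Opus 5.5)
Your proposal is correct and follows essentially the same route as the paper. The pooling constraint yields a common trajectory on $[0,\criticalTime]$; averaging the running cost with $K^2_{\type_2}=\Lambda K^2_{\type_1}$ gives the weight $\tilde{R}^2$, averaging the Stage~2 values gives $\bar{P}(\criticalTime)$, and the standard Riccati characterization then delivers $M$, the gains, and the value.

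Your extra checks go beyond what the paper writes out, with one caveat on continuity. Pooling before $\criticalTime$ combined with Assumption~\ref{asm:riccati_distinguishability} after it forces some $K^2_{\type_k}$ to jump at $\criticalTime\in(0,T)$ (for continuous $B^2$), so continuity can hold only on $[0,\criticalTime)$ and $[\criticalTime,T]$ separately. Hence $\feedbackspace^i$ must be read as piecewise continuous, which your single-point remark effectively does.
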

\begin{proof}
Under the constraint \eqref{eq:decomposition.observability_constraint}, both P2 types produce identical closed-loop dynamics on $[0,\criticalTime]$: the input to the state equation is $B^2(\type_k) K^2_{\type_k} x = B^2(\type_1) K^2_{\type_1} x$ for both $k$.
Thus $x_{\type_1}(t) = x_{\type_2}(t)$ for $t \in [0,\criticalTime]$, and we may drop the type subscript on $x$.
Substituting the Stage 2 value \eqref{eq:stage2_cost} into the expected cost, we have
\begin{equation}
    \bar{J} = \int_0^{\criticalTime} \left\| x \right\|_Q^2 + \left\| u^1 \right\|^2 - \mathbb{E}_\theta\left[\left\| u^2_{\type_k} \right\|^2\right] dt + \left\| x(\criticalTime) \right\|_{\bar{P}(\criticalTime)}^2.
\end{equation}
The terminal cost follows from $\mathbb{E}_\theta[\left\| x(\criticalTime) \right\|^2_{P_k(\criticalTime)}] = \left\| x(\criticalTime) \right\|^2_{\bar{P}(\criticalTime)}$ since the $x(\criticalTime)$ is common over $\theta$.
For the running cost, using $K^2_{\type_2} = \Lambda K^2_{\type_1}$ and writing $u^2 = K^2_{\type_1} x$:
\begin{align}
    \mathbb{E}_\theta\left[\left\| u^2_{\type_k} \right\|^2\right] &= \pi_1 \left\| K^2_{\type_1} x \right\|^2 + \pi_2 \left\| \Lambda K^2_{\type_1} x \right\|^2 \nonumber \\
    &= x^\top (K^2_{\type_1})^\top \left(\pi_1 I + \pi_2 \Lambda^\top \Lambda\right) K^2_{\type_1} x \nonumber \\
    &= \left\| u^2 \right\|_{\tilde{R}^2}^2.
\end{align}
This is a ZSLQ game as in Section~\ref{sec:preliminaries}, with $R^2 = \tilde{R}^2$, and the form of the Riccati variable $M$ and the associated solution strategies follow.
\end{proof}

From Proposition~\ref{prop:reduced_decisions}, we know that ex-ante equilibria are found from P1 choosing $\{ K^1_{\prior} \}$ and P2 choosing $(\criticalTime, \{ K^2_{\theta_1} \}_{t \in [0,\criticalTime]})$ simultaneously. 
However, using Prop.~\ref{prop:stage1_LQ} above, it is simpler to proceed via the following, sequential decomposition: (1) P2 picks $\criticalTime$, then (2) P1/P2 choose $K^1_{\prior}/K^2_{\theta_1}$, respectively, on $t\in[0,\criticalTime]$, resulting in game value $\bar{J}(\criticalTime)$.
Any ex-ante equilibrium will necessarily result in a solution to this sequential decomposition.\footnote{Though not vice versa.} Thus, using Prop.~\ref{prop:stage1_LQ}, we have reduced P2's task to the following problem:
\begin{equation}
    \label{eq:min_J}
    \max_{\criticalTime \in [0,T]} \bar{J}(\criticalTime),
\end{equation}
where we note that {\small$\bar{J}(\criticalTime) = \left\| x(0) \right\|_{M(0;\criticalTime)}^2$}. 
Optimizing only over a scalar $\criticalTime$, (\ref{eq:min_J}) can be solved via 0th-order methods (e.g., grid-search, as in Sec.~\ref{sec:experiments}).
However, derivatives can be computed via variational analysis of the Riccati equations, allowing for the use of gradient-based methods.
These have the additional benefit of allowing the verification of first-order necessary optimality conditions.
As an example, we compute the derivative of $\bar{J}(\criticalTime)$ with respect to $\criticalTime$, for the special case of time-invariant LQ~systems.

\begin{proposition}
\label{prop:sensitivity}
Let $M(t;\criticalTime)$ denote the solution to \eqref{eq:M_riccati} at time $t$,
for constant coefficient matrices.
Then
\begin{equation}
    \label{eq:dJds}
    \frac{d\bar{J}}{d\criticalTime} = x(0)^\top \left[ f(M(0)) + N(0) \right] x(0)
\end{equation}
where $f(M) = A^\top M + MA + Q - M\tilde{S}M$, and $N(t)$ solves the linear terminal value problem
\begin{gather}
    \label{eq:variational_N}
    \dot{N} + A^\top N + NA - N\tilde{S}M - M\tilde{S}N = 0, \\
    \label{eq:variational_N_terminal}
    N(\criticalTime) = -A^\top \bar{P} - \bar{P}A - Q + \sum_k \pi_k P_k S_k P_k
\end{gather}
with $S_k = B^1 B^{1\top} - B^2(\type_k) B^2(\type_k)^\top$ and $\bar{P} = \bar{P}(\criticalTime)$. 
\end{proposition}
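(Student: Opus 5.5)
The plan is to exploit time-invariance to write $M(0;\criticalTime)$ as the value of an autonomous flow run for duration $\criticalTime$ from a terminal condition that itself depends on $\criticalTime$, and then to differentiate by the chain rule. Concretely, I will introduce the reversed time $\tau = \criticalTime - t$. With constant $A, Q, \tilde S$, the matrix $\hat M(\tau) := M(\criticalTime - \tau;\criticalTime)$ solves the autonomous ODE $\frac{d\hat M}{d\tau} = f(\hat M)$ with $\hat M(0) = \bar P(\criticalTime)$, since $\dot M = -f(M)$ by \eqref{eq:M_riccati}. Let $\Phi(\tau; X)$ denote the flow of $X' = f(X)$. Then
\begin{equation*}
M(0;\criticalTime) = \Phi\big(\criticalTime;\, \bar P(\criticalTime)\big),
\end{equation*}
and $\bar J(\criticalTime) = x(0)^\top \Phi(\criticalTime;\bar P(\criticalTime))\, x(0)$.

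Differentiating in $\criticalTime$ gives two terms. The first is the partial derivative in the duration argument, $\partial_\tau \Phi = f(\Phi) = f(M(0))$. The second is the derivative with respect to the initial condition applied to $\frac{d\bar P}{d\criticalTime}$, that is, $D_X\Phi(\criticalTime;\bar P)\big[\tfrac{d\bar P}{d\criticalTime}\big]$. By standard ODE sensitivity theory, this is $\hat N(\criticalTime)$, where $\hat N$ solves the variational equation $\hat N' = Df(\hat M)[\hat N]$ with $\hat N(0) = \frac{d\bar P}{d\criticalTime}$. The Fréchet derivative of $f$ is $Df(M)[N] = A^\top N + NA - N\tilde S M - M\tilde S N$. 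Reverting to original time via $N(t) := \hat N(\criticalTime - t)$ gives $\dot N = -Df(M)[N]$, which is exactly \eqref{eq:variational_N}, and the quantity we need is $N(0)$.

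It remains to identify the terminal condition. Since $P_k$ solves \eqref{eq:LQ riccati} on $[\criticalTime,T]$ with fixed terminal data at $T$, $P_k(\criticalTime)$ is simply the trajectory evaluated at $\criticalTime$. Hence
\begin{equation*}
\frac{dP_k(\criticalTime)}{d\criticalTime} = \dot P_k(\criticalTime) = -\big(A^\top P_k + P_k A + Q - P_k S_k P_k\big).
\end{equation*}
Averaging with weights $\pi_k$ and using $\sum_k \pi_k = 1$ to collapse the linear terms yields \eqref{eq:variational_N_terminal}. Combining the two terms inside the quadratic form in $x(0)$ gives \eqref{eq:dJds}.

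The main obstacle is justifying differentiability rather than the algebra. This needs three things:
\begin{enumerate}
\item[(a)] $M(\cdot;\criticalTime)$ exists on $[0,\criticalTime]$ for $\criticalTime$ in a neighborhood, i.e., no finite escape. This is implicit in Proposition~\ref{prop:stage1_LQ} and persists under small perturbations by continuity of solutions in their data.
\item[(b)] The flow $\Phi$ is $C^1$ in $(\tau, X)$, which holds because $f$ is polynomial.
\item[(c)] $P_k$ is $C^1$ at $\criticalTime$, which holds since the coefficients are constant.
\end{enumerate}
The only point needing care is the chain-rule step that combines the duration derivative and the initial-condition derivative. I would handle it explicitly by computing the difference quotient of $\Phi(\criticalTime+h;\bar P(\criticalTime+h))$ and splitting it into the two increments. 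I expect this to be routine once existence on a neighborhood is secured.
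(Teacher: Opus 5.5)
Your proposal is correct and follows the paper's own proof: the paper also splits $\frac{d}{d\criticalTime}M$ by the chain rule into a time-shift term, equal to $f(M)$ by autonomy, and a terminal-condition sensitivity given by the linearized equation for $N$, with $N(\criticalTime)=\sum_k \pi_k \dot P_k(\criticalTime)$ read off from the $P_k$ Riccati equations. Your reversed-time flow formulation and your explicit differentiability checks (no finite escape near $\criticalTime$, joint $C^1$ dependence of the flow) only make rigorous what the paper leaves implicit.
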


\begin{proof}
We apply the chain rule:
\begin{equation}
    \frac{d}{d\criticalTime} M(t;\criticalTime) = \frac{\partial}{\partial \criticalTime} M + \frac{\partial M}{\partial \bar{P}(\criticalTime)} \frac{\partial \bar{P}(\criticalTime)}{\partial \criticalTime}.
\end{equation}
For the first term, since all coefficient matrices are constant, $M(t,\criticalTime,\bar{P})$ depends on $t$ and $\criticalTime$ only through $(\criticalTime - t)$, i.e. increasing terminal time to $\criticalTime+\epsilon$ is equivalent to solving backwards a step further to $t-\epsilon$.
Thus $\partial M / \partial \criticalTime = -\partial M / \partial t = f(M(t))$.

For the second term, we use a variational approach.
Perturbing the terminal condition $\bar{P} \mapsto \bar{P} + \epsilon H$ and writing $M(t) + \epsilon N(t) + O(\epsilon^2)$, substitution into \eqref{eq:M_riccati} and collecting $O(\epsilon)$ terms yields \eqref{eq:variational_N} with $N(\criticalTime) = H$.

It remains to compute $H = \partial \bar{P} / \partial \criticalTime$.
Differentiating $\bar{P}(\criticalTime) = \sum_k \pi_k P_k(\criticalTime)$ and substituting each $\dot{P}_k$ from its Riccati equation:
\begin{equation}
    \frac{\partial \bar{P}}{\partial \criticalTime} = \sum_k \pi_k \dot{P}_k(\criticalTime) = -A^\top \bar{P} - \bar{P}A - Q + \sum_k \pi_k P_k S_k P_k
\end{equation}
which gives the terminal condition \eqref{eq:variational_N_terminal}.
Combining both terms and evaluating at $t=0$ yields \eqref{eq:dJds}.
\end{proof}


\section{Numerical Experiments}
\label{sec:experiments}

We demonstrate our approach in a pursuit-evasion game where P1 has an initial control advantage. It is during this period that type deception proves most useful.
Consider a two-player, zero-sum pursuit-evasion game with state
$x(t) = (x^1(t), x^2(t)) \in \mathbb{R}^4$, each $x^i(t)$ denoting P$i$'s position, evolving according to the single-integrator dynamics:
\begin{equation}
\label{eq:PE.dynamics}
    \dot{x} = B^1(t)\, u^1 + B^2(t,\theta)\, u^2.
\end{equation}
Each player's control matrices are given by
\begin{equation}
\small 
    \label{eq:PE.B}
    B^1(t) = \beta_1(t) \begin{bmatrix} I_2 \\ 0 \end{bmatrix}, \qquad
    B^2(t,\theta_k) = \begin{bmatrix} 0 \\ \beta_2(t)\, \tilde{B}^2(\theta_k) \end{bmatrix},
\end{equation}
where
\begin{gather}
\small
    \tilde{B}^2(\theta_1) = \begin{bmatrix} b & 0 \\ 0 & 1 \end{bmatrix}, \quad
    \tilde{B}^2(\theta_2) = \begin{bmatrix} 1 & 0 \\ 0 & b \end{bmatrix}, \qquad b = 1.5, \\
    \beta_i(t) = \beta_i^{\mathrm{base}} + \frac{\beta_i^{\mathrm{boost}}}{1 + e^{(-1)^{i+1}\alpha(t - t_c)}}, \quad i=1,2 \label{eq:num_beta1} 
\end{gather}
The two P2 types possess equal overall control capability, but differ in their preferred axis of motion. Player 1 does not know this favored axis.
Moreover, the scalar weights $\beta_i(t)$ introduce time inhomogeneity, parametrized by $\beta_1^{\mathrm{base}} = 0.2$, $\beta_1^{\mathrm{boost}} = 0.8$,
$\beta_2^{\mathrm{base}} = 0.15$, $\beta_2^{\mathrm{boost}} = 0.65$, and
$t_c = 5$.
Under this parameterization, the pursuer's control effectiveness
decays from approximately $1.0$ to $0.2$, while the evader's grows from
approximately $0.15$ to $0.8$, with the transition governed by the sharpness
parameter $\alpha$.

The cost functional is given by:
\begin{equation}\label{eq:num_cost}
    J = \kappa \, \| x^1(T) - x^2(T) \|^2,
\end{equation}
with $\kappa = 0.1$, corresponding to $Q = 0$ and
$Q_f = \kappa \bigl[\begin{smallmatrix} I_2 & -I_2 \\ -I_2 & I_2 \end{smallmatrix}\bigr]$.
The prior is uniform, $\pi_1 = \pi_2 = 1/2$, and the horizon is $T = 10$,
with initial condition $x_0 = (0,\, 0,\, 1,\, 10)^\top$.

Figure~\ref{fig:PE.alpha_sweep} displays $\bar{J}(\tilde{s})$ for
$\alpha \in \{1, 2, 3, 5\}$. In all cases, the game value exhibits an
interior maximum $\tilde{s}^* \in (0, T)$, indicating that Player~2
benefits from a period of type deception before revelation. As $\alpha$
increases (i.e., the capability transition sharpens), the optimal revelation
time $\tilde{s}^*$ shifts toward the crossover center $t_c = 5$ and the
concealment gain becomes more pronounced.

\begin{figure}
    \centering
    \includegraphics[width=0.8\linewidth]{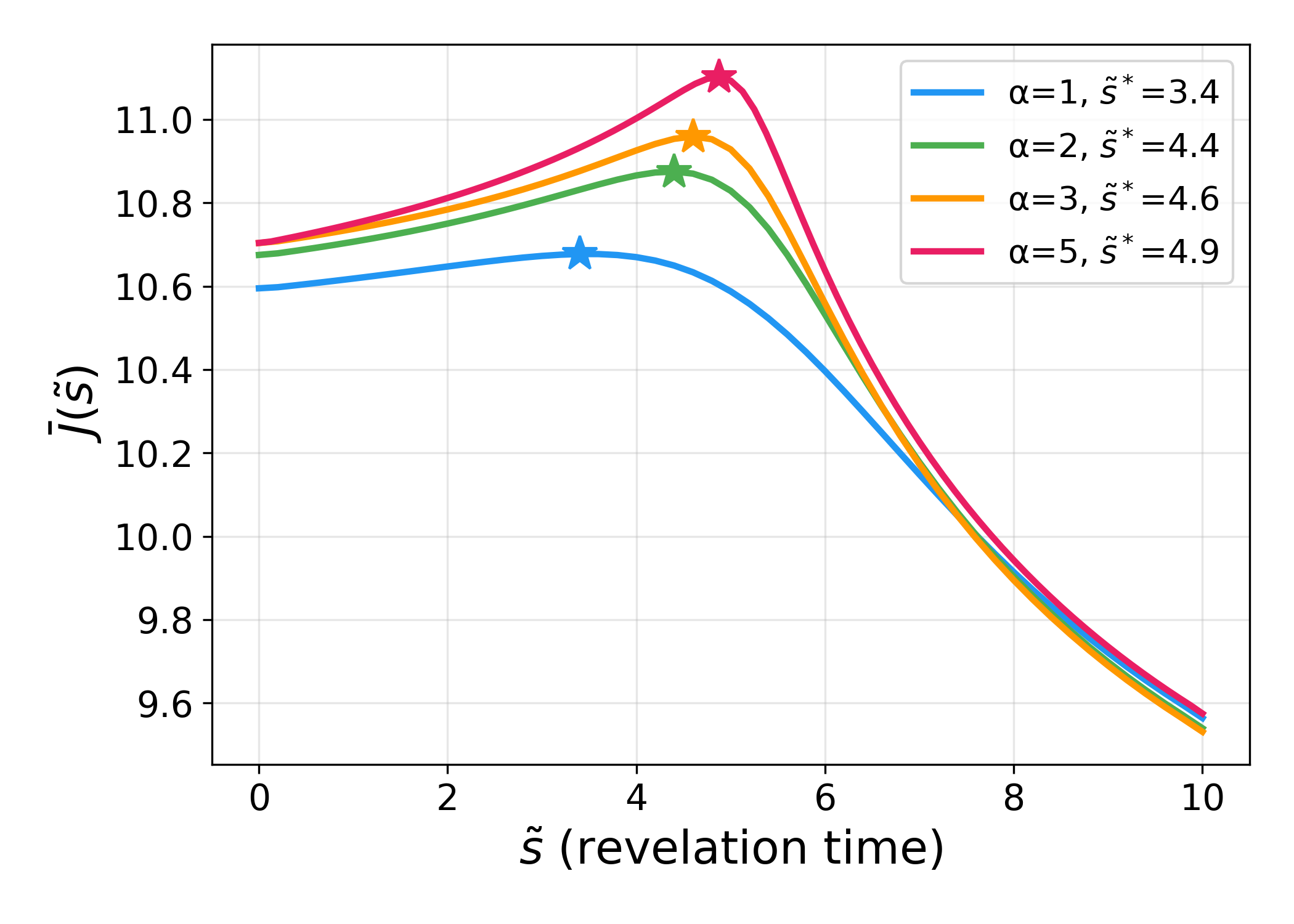}
    \caption{Ex-ante game value $\bar{J}(\tilde{s})$ as a function of revelation time $\tilde{s}$ for varying sigmoid sharpness $\alpha$. Stars indicate optimal revelation times $\tilde{s}^*$, computed via grid-search. All four cases exhibit interior optima, with $\tilde{s}^*$ approaching the crossover center $t_c = 5$ as $\alpha$ increases.}
    \label{fig:PE.alpha_sweep} \vspace{-6 pt}
\end{figure}



\section{Conclusion}
\label{sec:conclusion}

In this work, we developed a framework to introduce asymmetric information to ZSLQ games, in order to enable the possibility of deception in the differential setting.
We provided a principled approach to nominate candidate ex-ante equilibrium strategies, by reducing the game to a two-stage, nested LQ format that can be approached with the familiar Riccati machinery.
Additionally, we demonstrated the plausibility of our approach by applying it to a simple pursuit-evasion game, where ex-ante P2 gains from concealing their type during P1's period of advantage.
Future work includes establishing sufficient conditions for the existence of ex-ante equilibria, and analyzing their ex-interim type-rationality, as well as extending to the case of multiple, privately typed players, in which their respective revelation times themselves may become (a) belief-dependent functions, and (b) comprise a Nash decision in the initial stages.


\bibliographystyle{IEEEtran}
\bibliography{references}

\end{document}